\title{Toward Physically Realizable Quantum Neural Networks}
\author {
  Mohsen Heidari,   
  Ananth Grama, 
  Wojciech Szpankowski  \\
}
  \newtheorem{theorem}{Theorem}
\newtheorem{lem}{Lemma}
\newtheorem{remark}{Remark}
\def\old@comma{,}
    \old@comma\discretionary{}{}{}%
\definecolor{darkblue}{rgb}{0.1,0.1,0.8}
\definecolor{DarkGreen}{rgb}{0,0.6,0}
\definecolor{brickred}{rgb}{0.8, 0.25, 0.33}
\definecolor{britishracinggreen}{rgb}{0.0, 0.26, 0.15}
\definecolor{calpolypomonagreen}{rgb}{0.12, 0.3, 0.17}
\definecolor{ao(english)}{rgb}{0.0, 0.5, 0.0}
	\definecolor{cadmiumgreen}{rgb}{0.0, 0.42, 0.24}
\definecolor{burgundy}{rgb}{0.5, 0.0, 0.13}
\newcommand{\addv}[3]{%
	\iftoggle{Track}{%
    	\IfEqCase{#1}{%
       	 	{a}{\ifthenelse{\equal{#2}{ON}}{{\color{cadmiumgreen}#3}}{#3}}%
        	{b}{\ifthenelse{\equal{#2}{ON}}{{\color{brickred}#3}}{#3}}%
       		{c}{\ifthenelse{\equal{#2}{ON}}{{\color{burgundy}#3}}{#3}}%
       		{d}{\ifthenelse{\equal{#2}{ON}}{{\color{red}#3}}{#3}}
    	}[\PackageError{tree}{Undefined option to tree: #1}{}]%
	}{#3}%
}
\newcounter{relctr} 
\everydisplay\expandafter{\the\everydisplay\setcounter{relctr}{0}} 
\global\long\def\RR{\mathbb{R}}
\global\long\def\CC{\mathbb{C}}
\global\long\def\EE{\mathbb{E}}
\global\long\def\PP{\mathbb{P}}
\global\long\def\11{\mathbbm{1}}
\newcommand{\bfs}{\mathbf{s}}
\global\long\def\+{\oplus}
\newcommand\pmm{\{-1,1\}}
\def\<{\langle}
\def\>{\rangle}
  \renewcommand{\var}{\mathsf{var}}
  \newcommand{\var}{\mathsf{var}}
 \newcommand{\abs}[1]{\lvert#1\rvert}
 \newcommand{\norm}[1]{\lVert#1\rVert}
  \renewcommand{\set}[1]{\left\{#1\right\}}
  \newcommand{\set}[1]{\left\{#1\right\}}
\DeclareMathOperator*{\argmin}{arg\,min}
\DeclareMathOperator*{\tensor}{\otimes}
\providecommand{\tr}{tr}
  \renewcommand{\Tr}[1]{\tr \Big\{#1\Big\}}
  \newcommand{\Tr}[1]{\tr \Big\{#1\Big\}}
\def\Loss{L}
\def\qas{a_{\bfs}}
\def\qps{\sigma^\bfs}
\def\L2{\mathcal{L}^2(\mathcal{X}^d, P_{X^d})}
\def\E_mu{\mathcal{E}_{\mu}(\epsilon')}
\newacro{ptp}[PtP]{Point-to-Point}
\newacro{iid}[i.i.d.]{independent and identically distributed} 
\newacro{IID}[i.i.d.]{independent and identically distributed} 
\newacro{UFFS}[UFFS]{Unsupervised Fourier Feature Selection}
\newacro{SFFS}[SFFS]{Supervised Fourier Feature Selection}
\newacro{LS}[LS]{Laplacian Score}
\newacro{MAE}[MAE]{mean absolute error}
\newacro{MSE}[MSE]{mean square error}
\newacro{PAC}[PAC]{\textit{probably approximately correct}}
\newacro{VC}[VC]{Vapnik–Chervonenkis}
\newacro{ERM}[ERM]{Empirical Risk Minimization}
\newacro{SVM}[SVM]{support-vector machine}
\newacro{POVM}[POVM]{positive operator-valued measure}
\newacro{QLD}[QLD]{quantum low-degree algorithm}
\newacro{QNN}[QNN]{quantum neural networks}
\newacro{SGD}[SGD]{stochastic gradient descent}
\def\-{\raisebox{.75pt}{-}}
\providecommand{\customgenericname}{}
\newcommand{\newcustomtheorem}[2]{%
  \newenvironment{#1}[1]
  {%
   \renewcommand\customgenericname{#2}%
   \renewcommand\theinnercustomgeneric{##1}%
   \innercustomgeneric
  }
  {\endinnercustomgeneric}
}
\newcommand{\addmh}[1]{\addv{d}{ON}{#1}}
\begin{document}

\maketitle

\begin{abstract}
There has been significant recent interest in quantum neural networks (QNNs), along with their applications in diverse domains. Current solutions for QNNs pose significant challenges concerning their scalability, ensuring that the postulates of quantum mechanics are satisfied, and that the networks are physically realizable. The exponential state space of QNNs poses challenges for the scalability of training procedures. The no-cloning principle prohibits making multiple copies of training samples, and the measurement postulates lead to non-deterministic loss functions. Consequently, the physical realizability and efficiency of existing approaches that rely on repeated measurement of several copies of each sample for training QNNs is unclear. This paper presents a new model for QNNs that relies on band-limited Fourier expansions of transfer functions of quantum perceptrons (QPs) to design scalable training procedures. This training procedure is augmented with a randomized quantum stochastic gradient descent technique that eliminates the need for sample replication. We show that this training procedure converges to the true minima in expectation, even in the presence of non-determinism due to quantum measurement. Our solution has a number of important benefits: (i) using QPs with concentrated Fourier power spectrum, we show that the training procedure for QNNs can be made scalable; (ii) it eliminates the need for resampling, thus staying consistent with the no-cloning rule; and (iii) enhanced data efficiency for the overall training process since each data sample is processed once per epoch. We present detailed theoretical foundation for our models and methods' scalability, accuracy, and data efficiency. We also validate the utility of our approach through a series of numerical experiments. 

\end{abstract}

\section{Introduction}
%
%
%

With rapid advances in quantum computing, there has been increasing interest in quantum machine learning models and circuits. Work at the intersection of deep learning and quantum computing has resulted in the development of quantum analogs of classical neural networks, broadly known as quantum neural networks (QNNs) \cite{Mitarai2018,Farhi2018,Schuld_2020,Beer2020}. These networks are comprised of quantum perceptrons (QPs)~\cite{lewenstein1994quantum}, and the corresponding circuits (either layers of QPs or variational circuits) are physically realized through technologies such as trapped ions, quantum dots, and molecular magnets. QNNs have been applied to both classical and quantum data in the context of diverse applications. They have also inspired the development of novel classical neural network architectures~\cite{garg2020advances}, which have demonstrated excellent performance in applications such as image understanding and natural language processing.

Our effort focuses on developing QNNs for quantum data within the constraints of physically realizable quantum models (e.g., no-cloning, measurement state collapse). 
\addmh{Among many applications, quantum state classification
is of significance \cite{Chen2018} and has been studied under various specifications such as \textit{separability} of quantum states \citep{Gao2018,Ma2018}, integrated quantum photonics \citep{Kudyshev2020}, and dark matter detection through classification of polarization state of photons \cite{Dixit2021}. }
In yet other applications, it has been shown that data traditionally viewed in classical settings are better modeled in a quantum framework. In an intriguing set of results, Orus et al. demonstrate how data and processes from financial systems are best modeled in quantum frameworks~\cite{ORUS2019100028}.


The problem of constructing accurate QNNs for direct classification of quantum data is a complex one for a number of reasons: (i) the state of a QNN is exponential in the number of QPs. For this reason, even relatively small networks are hard to train in terms of their computation and memory requirements; (ii) the no cloning postulate of quantum states implies that the state of an unknown qubit cannot be replicated into another qubit; (iii) measurement of the state of a qubit corresponds to a realization of a random process, and results in state collapse; (iv) the non-deterministic nature of measurements implies that computation of classical loss functions itself is non-deterministic; and (v) existing solutions that rely on replicated data and oversampling for probabilistic bounds on convergence suffer greatly from lack of data efficiency. While there have been past efforts at modeling and training QNNs, one or more of these fundamental considerations are often overlooked. We present a comprehensive solution to designing quantum circuits for QNNs, along with provably efficient training techniques that satisfy aforementioned physical constraints.


\subsection{Main Contributions}
%
%

\addmh{In this work,} we propose a new QNN circuit, its analyses, and a training procedure to address key shortcomings of prior QNNs.
We make the following specific contributions: (i) we present a novel model for a QP whose Fourier spectrum is concentrated in a fixed sided band; (ii) we demonstrate how our QPs can be used to construct QNNs whose state and state updates can be performed in time linear in the number of QPs; (iii) we present a new quantum circuit that integrates gradient updates into the network, obviating the need for repeated quantum measurements and associated state collapse; (iv) we demonstrate data efficiency of our quantum circuit in that it does not need data replication to deal with stochastic loss functions associated with measurements; and (v) we show that our quantum circuit converges to the exact gradient in expectation, even in the presence of the stochastic loss function without data replication. We present detailed theoretical results, along with simulations demonstrating our model's scalability, accuracy, and data efficiency.


\subsection{Related Results}\label{sec:related}

Quantum neural networks have received significant recent research attention since the early
work of Toth et al.~\cite{Toth_1996}, who proposed an architecture comprised of quantum dots connected in
a cellular structure. 
Analogous to conventional neural networks, Lewenstein~\cite{lewenstein1994quantum}
proposed an early model for a quantum perceptron (QP) along with its corresponding unitary transformation. 
Networks of these QPs were used to construct early QNNs, which were used to classify suitably
preprocessed classical data inputs. \addmh{Since then, there has been significant research interest in the development of QNNs over the past two decades~\cite{Schuld_2014,Mitarai2018,Farhi2018,Torrontegui2018,Schuld_2020,Beer2020}. Massoli et al.~\cite{massoli2021leap} provide an excellent survey and classification of various models and technologies for QNNs.}

QNNs are generally trained using Variational Quantum
Algorithms~\cite{cerezo2020variational,guerreschi2017practical}, which use quantum analogs of
QNN parameters, initial qubits state (input data), and trail state (output of the QNN). These are
used in conjunction with optimization procedures such as gradient descent to update network parameters.
Among the best known of these procedures is the Quantum Approximate Optimization Algorithm of
Farhi et al.~\cite{farhi2014quantum}, which prescribes a circuit and an approximation bound for
efficient combinatorial optimization. This method was subsequently generalized to continuous
non-convex problems, for finding the ground state of a circuit through a suitably
defined evolution operator. A slew of recent results have used optimizers such as L-BFGS, Adam,
Nestorov, and SGD to train different QNN models. An excellent survey of these methods and associated
tradeoffs is presented by Massoli et al.~\cite{massoli2021leap}. The basic problem of exponential
network state and associated cost of optimization procedures is not directly addressed in these
results. In contrast, our focus in this work is on reducing the cost of training QNNs through
flexible and powerful constraints on Fourier expansions of QPs.

A second issue relating to convergence of these prior QNNs arises from the stochastic nature
of quantum measurements -- i.e., measuring the quantum state (in particular, the trial state) to compute
a loss function is inherently stochastic. For this reason, the convergence of prior QNNs relies
on repeated processing of the same data sample until the ground state is achieved with prescribed high
probability. 
There are two challenges
for these prior solutions: (i) repeated reads of the same state are problematic because of the no-cloning
rule for quantum states; and (ii) the cost of the training procedure is significantly higher due
to more extensive data volumes. In contrast, our method does not require data resampling, and therefore
does not have the aforementioned drawbacks.

Our work formally demonstrates how we can compute gradients in effective and efficient procedures (Theorem 1) and achieve asymptotically faster convergence in terms of data processing (Theorem 2). These two results provide the formal basis for our models and methods, as compared to prior results.


\section{Model}\label{sec:model}

\subsection{Quantum Machine Learning Model}
 The focus of this paper is on classification of quantum states. Following the framework in \cite{HeidariQuantum2021},  our quantum machine learning model inputs a set of $n$ quantum states corresponding to quantum training data samples, each paired with a classical label $y\in\mathcal{Y}$. We write these training samples as $\{(\rho_i, y_i)\}_{i=1}^n$. Each $\rho_i$ is a density operator acting on the Hilbert space $H$ of $d$-qubits, i.e., $\dim(H) = 2^d$. The data samples with the labels are drawn independently from a fixed but unknown probability distribution $D$. 
A quantum learning algorithm takes training samples as input to construct a predictor for labels of unseen samples. The prediction is in the form of a \textit{quantum measurement}  that acts on a quantum state and outputs $\hat{y}\in \mathcal{Y}$ as the predicted label. 

To test a predictor $\mathcal{M}:=\{M_{y}: y\in \mathcal{Y}\}$, a new sample $({\rho}_{{test}}, y_{test})$ is drawn randomly according to $D$. Without revealing $y_{test}$, we measure ${\rho}_{{test}}$ with $\mathcal{M}$. In view of the the postulate of quantum measurements, the outcome of this predictor is a random variable $\hat{Y}$ that together with the true label $Y$ form the joint probability distribution given by $$\PP\{Y=y_{test}, \hat{Y}=\hat{y}\}=D_{Y}(y_{test})\tr{M_{\hat{y}}\rho_{test}},$$
where $D_Y$ is the marginal of $D$.  
Thus, one can use an existing loss function $\ell: \mathcal{Y}\times \mathcal{Y}\mapsto \RR$  (such as the 0-1 loss or square loss) to measure the accuracy of the predictors.


\subsection{Quantum Neural Networks}
Figure \ref{fig:QNN model} shows a generic model for QNNs considered in our work.  The QNN consists of a \textit{feed-forward} network of interconnected QPs, each of which is a parametric unitary operator. At the output of the last layer is a quantum measurement acting on the readout qubits to produce a classical output as the label's prediction. 

The input to the QNN (or the first layer) is a state of $d'$ qubits consisting of the original $d$-qubit sample $\rho$ padded with additional auxiliary qubits, say $\ket{0}$. The padding allows the QNN to implement a more general class of operations (quantum channels) on the samples \footnote{ It is well-known that any quantum channel (CPTP map) is a partial trace of a unitary operator (Stinespring dilation) in a larger Hilbert space. }.  The input to the $l^{th}$ layer is the output of the previous layer $l-1$.  

\begin{figure}[hbtp]
\centering
\includegraphics[scale=0.7]{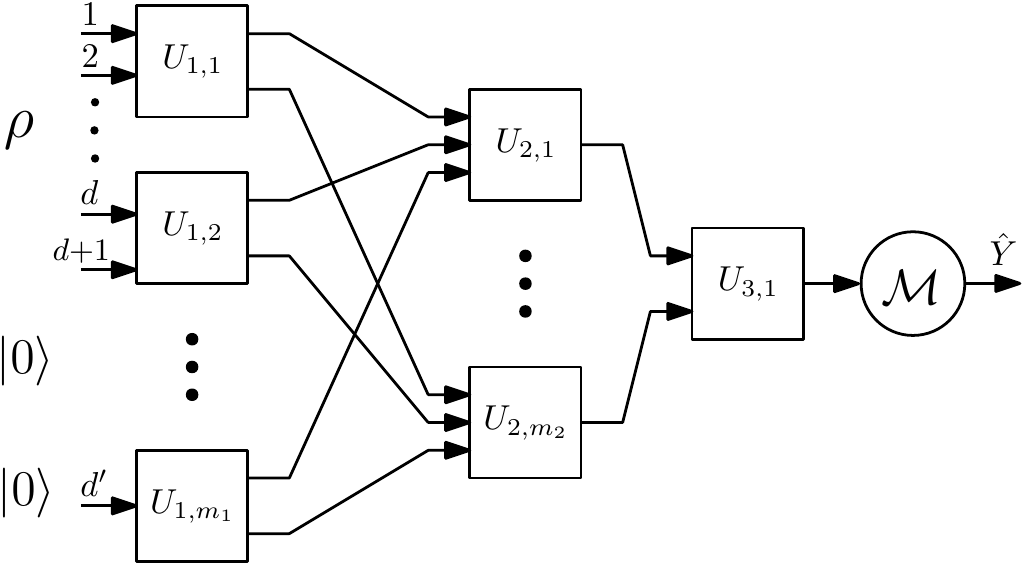}
\caption{A generic feed-forward QNN with one hidden layer and parametric unitary operators as QPs. 
}
\label{fig:QNN model}
\end{figure}


\subsection{Band Limited QPs}
We use an abstraction of a QP that generalizes the model of Farhi et al.~\cite{Farhi2018}. We consider QPs of the form $U=e^{i A}$, where $A$ is a Hermitian operator acting on a small subsystem of the $d'$-qubit system, hence it is band-limited. This notion stems from quantum Fourier expansion using Pauli operators. 

\noindent{\bf Quantum Fourier Expansion:} 
Tensor products of Pauli operators together with identity  have been used to develop a quantum Fourier expansion \cite{Montanaro2008} analogous to the Fourier expansion on the Boolean cube \cite{Wolf2008}. We denote these operators by  $\{\sigma^0,\sigma^1, \sigma^2, \sigma^3\}$. Furthermore, as a shorthand, given any vector $\bfs\in \set{0,1,2,3}^d$, we denote by $\qps$ the tensor product $\sigma^{s_1}\tensor \sigma^{s_2}\tensor \cdots \tensor \sigma^{s_d}.$ With this notation we present the notion of quantum Fourier expansion:
\begin{remark}\label{rem:quantum Fourier}
Any bounded  operator $A$ on the Hilbert space of $d$ qubits can be uniquely written as:   $A = \sum_{\bfs \in \{0,1,2,3\}^d} a_{\bfs} ~\sigma^{\bfs},$ where $a_\bfs\in \CC$ are the Fourier coefficients of $A$ {and are given by $a_\bfs=\frac{1}{2^d}\tr\big\{A\qps\big\}$.} If $A$ is Hermitian then $\qas \in \RR$. 
\end{remark}

Consider a QP acting on the subsystem corresponding to a subset of coordinates  $\mathcal{J}\subseteq [d']$. Since $A$ acts on qubits, then from Remark \ref{rem:quantum Fourier}, it decomposes in terms of Pauli operators as 
\begin{align*}
A = \sum_{\substack{\bfs: s_j = 0, \forall j \notin \mathcal{J}}} \qas~ \qps. 
\end{align*}
Note that the Fourier coefficients are zero outside of the coordinate subset $\mathcal{J}$, leading to a band-limited power spectrum in the Fourier domain. Moreover, these Fourier coefficients are used to fine-tune the QP. Intuitively, these coefficients can be viewed as quantum analogs of weights in classical NNs. 

To control the power spectrum of QPs, we bound $|\mathcal{J}|\leq k$, where $k$ is a bandwidth parameter. This constraint ensures that  $\qas=0$ for any $\bfs$ with more than $k$ non-zero components. In addition, it provides the basis for controlling the cost of gradient computation and update by limiting the number of coefficients to optimize over. In this paper, we typically set $k=2$.


Let $\overrightarrow{a}_{l, j}$ denote the vector of Fourier coefficients for the $j$th QP in the $l^{th}$ layer and $U_{l,j}(\overrightarrow{a}_{l, j})$ denote the  unitary operator of this QP. With this setup, each layer consists of several QPs acting on different subsystems of $\dim \leq 2^k$. By $U_l(\overrightarrow{a}_{l})$, denote the overall unitary operator of the $l^{th}$ layer, which is given by 
$
U_l(\overrightarrow{a}_{l}) = \prod_{j=1}^{m_l} U_{l,j}(\overrightarrow{a}_{l, j}),
$
   where $m_l$ is the number of QPs in this layer.
 Building on this formulation, a QNN with $L$ layers is itself a unitary operator that factors as:
\begin{align*}
  U_{QNN}(\overrightarrow{a}) = U_L(\overrightarrow{a}_{L})U_{L-1}(\overrightarrow{a}_{L-1})\cdots U_1(\overrightarrow{a}_{1}),
  \end{align*}
  where $\overrightarrow{a}$ is the vector of all the parameters of all the layers. Since, each layer is characterized by $m_l 4^k$ coefficients,  the QNN needs at most $c_{QNN} = 4^k m $ coefficients to be optimized over, where $m$ is the total number of QPs. 

\subsection{Computing the Loss}
To the output of the final layer, one applies a quantum measurement $\mathcal{M}=\{M_y: y\in \mathcal{Y}\}$ to produce the prediction for the classical label. For a padded sample $\rho' = \rho \tensor \ketbra{0}$, measuring the output qubit of the QNN results in an outcome $\hat{y}$ with the following probability:
\begin{align*}
\PP(\hat{Y}=\hat{y}| \overrightarrow{a}, \rho) = \tr{ M_{\hat{y}} U_{QNN}(\overrightarrow{a}) \rho' U^\dagger_{QNN}(\overrightarrow{a})}.
\end{align*}
 
Let   $\ell: \mathcal{Y}\times \mathcal{Y}\mapsto \RR$ be the loss function. Then, conditioned on a fixed sample  $(\rho, y)$,  the sample's expected loss taken w.r.t. $\hat{Y}$ is given by
\begin{align}\nonumber
L( &\overrightarrow{a}, \rho, y)  = \EE_{\hat{Y}} [ \ell(\hat{Y}, y)]\\
& = \sum_{\hat{y}}\ell(y, \hat{y}) \tr{ M_{\hat{y}} U_{QNN}(\overrightarrow{a}) \rho' U^\dagger_{QNN}(\overrightarrow{a})}. 
\end{align}
Taking the expectation over the sample's distribution $D$ gives the generalization loss:
\begin{align*}
L(\overrightarrow{a}) &=  \EE_D[L(\overrightarrow{a}, \rho, Y)].
\end{align*}

While it is desirable to minimize the expected loss, this goal is not feasible, since the underlying distribution $D$ is unknown. Alternatively,  given a sample set $\mathcal{S}_n = \{(\rho_j, y_j)\}_{j=1}^n$, one aims to minimize the average per-sample expected loss:
\begin{align}\label{eq: average sample exp loss }
\frac{1}{n}\sum_j L( &\overrightarrow{a}, \rho_j, y_j).
\end{align}
However, unlike classical supervised learning, exact computation of this loss is not possible as $\rho_j$'s are unknown in general. One can approximate this loss, but under the condition that several exact copies of each sample are available. Typically, this is infeasible in view of the no-cloning principle. This restriction makes training more challenging than its classical counterpart. We argue that even if several copies of the samples are available, such a strategy is not desirable due to its increased data and computational cost. In the next section, we present our optimization approach based on a randomized variant of SGD, to address this problem.

\section{Quantum Stochastic Gradient Descent}

In this section, we present our approach to training QNNs using gradient descent. Ideally, if the $t^{th}$ sample's expected loss $\Loss(\overrightarrow{a},\rho_t, y_t)$ was known, one would apply a standard gradient descent method to train the QNN as follows:
\begin{align*}
\overrightarrow{a}^{(t+1)} = \overrightarrow{a}^{(t)} - \eta_t \nabla \Loss(\overrightarrow{a},\rho_t, y_t).
\end{align*}
Typically, $\Loss(\overrightarrow{a},\rho_t, y_t)$ is unknown hence the above update is not practical. One approach to deal with this issue is to use several exact copies of each sample, enabling one to approximate the gradient ~\cite{Farhi2018}. 
The drawback of this approach is two-fold: (i) multiple exact copies of each sample are needed; and (ii) the associated utilization complexity for training is high. The latter drawback derives from the fact that approximating the derivative of the sample's expected loss with error upto $\varepsilon$ requires processing of $O(1/\varepsilon^2)$ exact copies. Therefore, training a QNN with the total of $c_{QNN}$ parameters and using $T$ iterations of SGD requires $O( T\frac{c_{QNN}}{\varepsilon^2})$ uses of the QNN. 

 In this paper, we propose an alternate procedure for performing quantum stochastic gradient descent (QSGD) without the need for exact copies, substantially reducing the utilization complexity to $O(Tc_{QNN})$. We note here that minimizing the loss without access to exact copies is a more complex problem since even the per-sample loss is unknown due to the randomness from quantum measurements. 
 In particular, we introduce a randomized QSGD with the following update rule:
 \begin{align*}
\overrightarrow{a}^{(t+1)} = \overrightarrow{a}^{(t)} - \eta_t Z_t,
\end{align*}
where $Z_t$ is a random variable representing the outcome of a gradient measurement that is designed in the preceding section of the paper.  
We present an analysis for a QNN and associated convex objective function. We note that non-convex objective functions associated with deep networks are typically formulated as convex quadratic approximations within a trust region, or using cubic regularization.

\subsection{Unbiased Measurement of the Gradient}

 We now present a procedure for measuring the derivative of the expected loss (Figure \ref{fig:QNN gradient}). 
 We start by training a single-layer QNN. Training multi-layer QNNs with this approach is done using \textit{backpropagation} to find (local) minima for the loss function. 
 The corresponding unitary operator of a single-layer network is of the form: 
\begin{align}\label{eq:perceptron U1}
U_{QNN}(\overrightarrow{a}) = U_1(\overrightarrow{a}) = \exp\Big\{i\sum_{\bfs} \qas \qps \Big\}.
\end{align}
In what follows, we intend to measure the derivative of the loss w.r.t. $\qas$ for a given $\bfs$ appearing in the summation above. In our method, at each time $t$, we apply the QNN on the $t^{th}$ sample $\rho_t$, which produces $\rho_t^{out}$. Next, we add an auxiliary qubit to $\rho^{out}_t$, which creates the following state:
 \begin{align}\label{eq:rho tild}
 \tilde{\rho}_t = \rho^{out}_t \tensor \ketbra{+}_E,
 \end{align}
 where $\ket{+}_E=\frac{1}{\sqrt{2}}(\ket{0}_E+\ket{{1}}_E)$, and $E$ represents the auxiliary Hilbert space. Then, we apply the following unitary operator on $\tilde{\rho}_t$:
 \begin{align}\label{eq:Vs}
 V_\bfs =  e^{i\frac{\pi}{4} \qps} \tensor \ketbra{0}_E + e^{-i\frac{\pi}{4} \qps} \tensor \ketbra{1}_E.
 \end{align}

\begin{figure}[tp]
\centering
\includegraphics[scale=1]{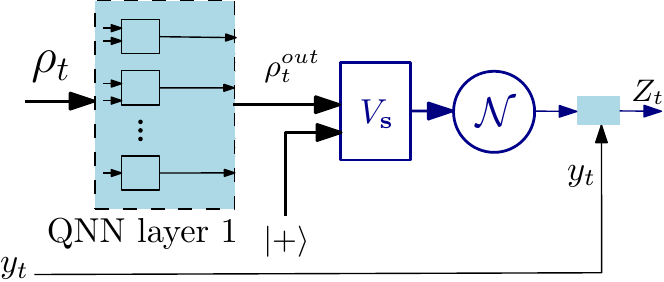}
\caption{The procedure for measuring the derivative of the loss consisting of a unitary operator $V_\bfs$ followed by a quantum measurement $\mathcal{N}$ with additional classical processing. 
}
\label{fig:QNN gradient}
\end{figure}

Next, we measure the state by a quantum measurement $\mathcal{N}:=\set{\Lambda_{b,\hat{y}}: \hat{y}\in \mathcal{Y}, b\in \{0,1\}}$ with outcomes in $\mathcal{Y}\times \{0,1\}$ and operators:
\begin{align}\label{eq:POVM N}
\Lambda_{b, \hat{y}}  = M_{\hat{y}} \tensor \ketbra{b}_E, \qquad \forall \hat{y}\in \mathcal{Y},~ b =0,1. 
\end{align}
Finally, if the outcome of this measurement is $(\hat{y},b)$, we compute $z_t =-2(-1)^b \ell(y,\hat{y})$ as the measured gradient, where $\ell(\cdot,\cdot)$ is the loss function. Note that $Z_t$ is a random variable depending on the current parameters of the network and the input sample  $(\rho_t, y_t)$. The following lemma,  shows that $Z_t$ is an unbiased measure of the derivative of loss with respect to $\qas$.  

\begin{lem}\label{lem:gradient estimation}
Let $Z_t$ be the output of the procedure shown in Figure \ref{fig:QNN gradient} applied on the $t^{th}$ sample. Then $Z_t$ is an unbiased estimate of the derivative of the loss. More precisely,  
$\EE[Z_t|y_t, \rho_t] = \frac{\partial \Loss( \overrightarrow{a}, \rho_t, y_t)}{\partial \qas}.$
\end{lem}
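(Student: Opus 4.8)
The plan is to propagate the state through the circuit of Figure~\ref{fig:QNN gradient}, write the joint law of the measurement outcome $(\hat y,b)$ under $\mathcal N$, and then evaluate $\EE[Z_t\mid y_t,\rho_t]$ directly from $z_t=-2(-1)^b\ell(y_t,\hat y)$. Everything hinges on one algebraic fact: a Pauli string squares to the identity, $(\qps)^2=\id$, so $e^{\pm i\frac{\pi}{4}\qps}=\tfrac{1}{\sqrt2}(\id\pm i\qps)$. This is what turns the two $\pm\pi/4$ conjugations hard-wired into $\Vs$ into a parameter-shift identity for the commutator $[\qps,\rho_t^{out}]$.

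First I would track the state. Writing $\ketbra{+}_E=\tfrac12\sum_{b,b'}\ket{b}\bra{b'}_E$ and conjugating $\tilde\rho_t=\rho_t^{out}\tensor\ketbra{+}_E$ by $\Vs$ sends the ancilla block $\ket{b}\bra{b'}_E$ to $e^{(-1)^b i\frac\pi4\qps}\rho_t^{out}e^{-(-1)^{b'} i\frac\pi4\qps}\tensor\ket{b}\bra{b'}_E$. Since $\mathcal N$ reads the ancilla in the computational basis, only the diagonal branches $b=b'$ survive the trace against $\Lambda_{b,\hat y}=M_{\hat y}\tensor\ketbra{b}_E$, and the outcome law reduces to
\[
\PP(\hat y,b\mid y_t,\rho_t)=\tfrac12\Tr{M_{\hat y}\,e^{(-1)^b i\frac\pi4\qps}\,\rho_t^{out}\,e^{-(-1)^b i\frac\pi4\qps}}.
\]
Summing $z_t$ against this law, the factor $\tfrac12$ from $\ketbra{+}_E$ cancels the $2$ in $z_t$, leaving $\EE[Z_t\mid y_t,\rho_t]=\sum_{\hat y}\ell(y_t,\hat y)\,\Tr{M_{\hat y}\big(g(-\tfrac\pi4)-g(\tfrac\pi4)\big)}$, where $g(\theta):=e^{i\theta\qps}\rho_t^{out}e^{-i\theta\qps}$.

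Next I would establish the parameter-shift identity. Expanding $g(\pm\tfrac\pi4)$ with $e^{\pm i\frac\pi4\qps}=\tfrac1{\sqrt2}(\id\pm i\qps)$ and using $(\qps)^2=\id$ gives $g(\pm\tfrac\pi4)=\tfrac12\big(\rho_t^{out}+\qps\rho_t^{out}\qps\pm i[\qps,\rho_t^{out}]\big)$, so the symmetric difference collapses to $g(\tfrac\pi4)-g(-\tfrac\pi4)=i[\qps,\rho_t^{out}]$, which is exactly $\tfrac{d}{d\theta}g(\theta)\big|_{\theta=0}$. Hence $\EE[Z_t\mid y_t,\rho_t]$ equals a fixed multiple of $\sum_{\hat y}\ell(y_t,\hat y)\Tr{M_{\hat y}\,i[\qps,\rho_t^{out}]}$, with the sign and normalization pinned down precisely by the definition of $z_t$.

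It then remains to differentiate the loss and match. From $\Loss(\overrightarrow a,\rho_t,y_t)=\sum_{\hat y}\ell(y_t,\hat y)\Tr{M_{\hat y}U_{QNN}\rho_t'U_{QNN}^\dagger}$ with $\rho_t^{out}=U_{QNN}\rho_t'U_{QNN}^\dagger$, the target is to show that $\partial_{\qas}\Loss$ equals the same multiple of $\sum_{\hat y}\ell(y_t,\hat y)\Tr{M_{\hat y}\,i[\qps,\rho_t^{out}]}$ produced above. This last matching is the main obstacle. Because $U_{QNN}=\exp\{i\sum_{\bfs}\qas\qps\}$ is a single exponential of non-commuting Pauli terms, one does \emph{not} have $\partial_{\qas}U_{QNN}=i\qps U_{QNN}$; Duhamel's formula gives $\partial_{\qas}U_{QNN}=i\int_0^1 e^{isA}\qps e^{i(1-s)A}\,ds$ with $A=\sum_{\bfs}\qas\qps$, and after cyclicity of the trace this produces a commutator against the \emph{rotated} generator $\int_0^1 e^{iuA}\qps e^{-iuA}\,du$ rather than the bare $\qps$ that $\Vs$ injects. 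The real work is to justify replacing this averaged generator by $\qps$: this is immediate whenever $[\qps,A]=0$ (in particular for a QP carrying a single Pauli term, or when $e^{i\qas\qps}$ can be commuted to the boundary of the layer so that the differentiated factor acts outermost on $\rho_t^{out}$), which is exactly the placement $\Vs$ realizes. Once this identification is secured, cyclicity of the trace together with the commutator identity above closes the argument.
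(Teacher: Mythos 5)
You take essentially the same route as the paper's own proof: conjugate $\tilde{\rho}_t$ by $V_\bfs$, note that only the diagonal ancilla blocks survive the trace against $\Lambda_{b,\hat{y}}=M_{\hat{y}}\tensor\ketbra{b}_E$, and convert the resulting $\pm\frac{\pi}{4}$ difference into the commutator $i\big[\qps,\rho_t^{out}\big]$; the paper cites Mitarai et al.\ for that identity where you rederive it from $(\qps)^2=I$, but the content is identical. The one substantive point where you go beyond the paper is the Duhamel observation, and it is a genuine one: the paper's opening display asserts in effect that $\frac{\partial}{\partial \qas}U_{QNN}=i\qps U_{QNN}$, which is valid only when $\qps$ commutes with the whole generator $A=\sum_{\bfs'}a_{\bfs'}\sigma^{\bfs'}$; for a band-limited QP with non-commuting Pauli terms the derivative involves the averaged generator $\int_0^1 e^{iuA}\qps e^{-iuA}\,du$, so the lemma implicitly requires either that commutation hypothesis or a parametrization of the QP as a product of single-Pauli rotations. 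Be aware, though, that your proposed resolution (``which is exactly the placement $V_\bfs$ realizes'') does not close this: $V_\bfs$ controls where $\qps$ is injected into the \emph{measurement} circuit, but it cannot change what $\frac{\partial}{\partial\qas}U_{QNN}$ is; an explicit hypothesis on the QP is what is needed. Finally, do carry the signs to the end rather than leaving them ``pinned down by the definition of $z_t$'': with $z_t=-2(-1)^b\ell(y_t,\hat{y})$ as defined, the outcome law gives $\EE[Z_t|y_t,\rho_t]=\sum_{\hat{y}}\ell(y_t,\hat{y})\tr\big\{M_{\hat{y}}\big(e^{-i\frac{\pi}{4}\qps}\rho_t^{out}e^{i\frac{\pi}{4}\qps}-e^{i\frac{\pi}{4}\qps}\rho_t^{out}e^{-i\frac{\pi}{4}\qps}\big)\big\}=-\frac{\partial \Loss(\overrightarrow{a},\rho_t,y_t)}{\partial \qas}$ (the paper's final chain of equalities contains the same sign slip relative to its earlier display for the derivative), so unbiasedness holds only after flipping the sign in the definition of $z_t$.
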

\begin{proof}
We start by taking the derivative of the sample's expected loss w.r.t. $\qas$. Instead of the finite difference method, we compute the derivative directly. Using \eqref{eq:perceptron U1},  for the $t^{th}$ sample, we can write:
\begin{align*}
&\frac{\partial  \Loss( \overrightarrow{a}, \rho_t, y_t)}{\partial \qas} =\sum_{\hat{y}} i\ell(y_t,\hat{y}) \tr\Big\{ M_{\hat{y}}\Big(\\
&\qps U_{QNN}(\overrightarrow{a})\rho_t U^t_{QNN}(\overrightarrow{a})
 - U_{QNN}(\overrightarrow{a})\rho_t U^t_{QNN}(\overrightarrow{a})\qps\Big) \Big\}.
\end{align*}
Denote $\rho_t^{out}=U_{QNN}(\overrightarrow{a})\rho_t U^\dagger_{QNN}(\overrightarrow{a})$ as the output state. Hence, the derivative is given by:
\begin{align}\label{eq:Delta Loss}
\frac{\partial  \Loss( \overrightarrow{a}, \rho_t, y_t)}{\partial \qas} 
& = \sum_{\hat{y}} i \ell(y_t,\hat{y})  \tr{M_{\hat{y}} \big[ \qps, \rho_t^{out}\big]},
\end{align}
where $[\cdot, \cdot]$ is the commutator. Note that $\big[ \qps, \rho_t^{out}\big]$ might not be measurable directly. However, as $\qps$ is a product of Pauli operators, as shown in \cite{Mitarai2018}, we have that: 
\begin{align*}
\big[ \qps, \rho_t^{out}\big] &= i \Big( e^{-i\frac{\pi}{4}\qps}\rho_t^{out}e^{i\frac{\pi}{4}\qps} - e^{i\frac{\pi}{4}\qps}\rho_t^{out}e^{-i\frac{\pi}{4}\qps} \Big).
\end{align*}
With this relation, the derivative of the loss equals:
\begin{align*}
\frac{\partial  \Loss( \overrightarrow{a}, \rho_t, y_t)}{\partial \qas} = -\sum_{\hat{y}}  \ell(y_t,\hat{y})  &\tr\Big\{ M_{\hat{y}} \Big( e^{-i\frac{\pi}{4}\qps}\rho_t^{out}e^{i\frac{\pi}{4}\qps}\\
& - e^{i\frac{\pi}{4}\qps}\rho_t^{out}e^{-i\frac{\pi}{4}\qps} \Big)\Big\}.
\end{align*}
Next, we show that the procedure in Figure \ref{fig:QNN gradient} measures this derivative. With the definition of $V_\bfs$ in \eqref{eq:Vs} and $\tilde{\rho}_t$ in \eqref{eq:rho tild}, we have that 
\begin{align*}
V_\bfs \tilde{\rho}_t V^\dagger_\bfs = \frac{1}{2}\Big( & e^{i\frac{\pi}{4} \qps}  \rho^{out}_t  e^{-i\frac{\pi}{4} \qps} \tensor \ketbra{0}\\
&+ e^{i\frac{\pi}{4} \qps}  \rho^{out}_t  e^{i\frac{\pi}{4} \qps} \tensor \outerproduct{0}{1}\\
& + e^{-i\frac{\pi}{4} \qps}  \rho^{out}_t  e^{-i\frac{\pi}{4} \qps} \tensor \outerproduct{1}{0}\\
&+ e^{-i\frac{\pi}{4} \qps}  \rho^{out}_t  e^{i\frac{\pi}{4} \qps} \tensor \outerproduct{1}{1}\Big).
\end{align*}
Next, we apply the measurement $\mathcal{N}$. Then, with $(\hat{y},b)$ as the outcome, we output $z_t =-2 (-1)^b \ell(y_t, \hat{y})$. It is not difficult to verify that conditioned on a fixed $\hat{y}$, the expected outcome of $(-1)^b$ equals:
\begin{align*}
\tr{(\Lambda_{0,\hat{y}}-\Lambda_{1,\hat{y}}) V_\bfs \tilde{\rho}_t V_\bfs^\dagger}& = \frac{1}{2} \tr{ M_{\hat{y}}~ e^{i\frac{\pi}{4} \qps}  \rho^{out}_t  e^{-i\frac{\pi}{4} \qps}}\\
&-\frac{1}{2}\tr {M_{\hat{y}}~ e^{-i\frac{\pi}{4} \qps}  \rho^{out}_t  e^{i\frac{\pi}{4} \qps}}.
\end{align*}
Therefore, the expectation of $Z_t$ is obtained by taking the expectation over $\hat{Y}$and equals to 
\begin{align*}
\EE[Z_t|y_t, \rho_t] & = -2\sum_{\hat{y}, b} (-1)^b \ell(y_t,\hat{y}) \tr{\Lambda_{b,\hat{y}} V_\bfs \tilde{\rho}_t V^\dagger_\bfs} \\
&=-\sum_{\hat{y}} \ell(y_t,\hat{y}) \tr\Big\{ M_{\hat{y}} \Big( e^{i\frac{\pi}{4} \qps}  \rho^{out}_t  e^{-i\frac{\pi}{4} \qps}\\
&\hspace{80pt} - e^{-i\frac{\pi}{4} \qps}  \rho^{out}_t  e^{i\frac{\pi}{4} \qps}\Big)\Big\} \\
&=\frac{\partial  \Loss( \overrightarrow{a}, \rho_t, y_t)}{\partial \qas}. 
\end{align*}
With that the proof is completed.
\end{proof}

\paragraph{Measuring the gradient with randomization:}
We have, thus far, presented a method to measure the derivative of the loss without sample reuse. Next, we present a randomized approach to measure the gradient with respect to all parameters $\overrightarrow{a}$. At each step, we randomly select a $\bfs$ as a component of $\overrightarrow{a}$ and measure the derivative of the loss with respect to it. Conditioned on a selected $\bfs$, we create a vector $\overrightarrow{Z}_t$ whose components are zeros except at the $\bfs^{th}$ component, which is the outcome of  measuring the derivative with respect to $\qas$. Consequently, by randomly selecting $\bfs$ and measuring the derivative with respect to it, we obtain an unbiased measure of the gradient. This procedure is summarized in Algorithm \ref{alg:QSGD}. With this argument and Lemma \ref{lem:gradient estimation}, we get the desired result which is formally stated below.

\begin{algorithm}[ht]
\caption{Randomized QSGD}\label{alg:QSGD}
\DontPrintSemicolon
\LinesNumbered
\KwIn{Training data $\{(\rho_t, y_t)\}_{t=1}^n$, Learning rate $\eta_t$}
\KwOut{Finial parameters of QNN: $\overrightarrow{a}$}

Initialize the parameter $\overrightarrow{a}$ with each component selected with uniform distribution over $[-1,1]$.\;
\For{$t=1$ to $n$}{
Randomly select a QP in the network. Let it be the $j^{th}$ QP of layer $l$.\;
Pass $\rho_t$ through the previous layers and let $\rho_t^{l-1}$ be the output of the layer $(l-1)$.\;
Randomly select a component $a_{l,j, s}$ of $\overrightarrow{a}_{l,j}$.\;
Apply $V_\bfs$ as in \eqref{eq:Vs}  on $\rho_t^{l-1}\tensor \ketbra{+}$ and measure the resulting state with $\mathcal{N}$ as in \eqref{eq:POVM N}. \;
With $(\hat{y},b)$ being the outcome, compute the measured derivative as $z_t =-2(-1)^b \ell(y,\hat{y})$.\;
Update the $s^{th}$ component of $\overrightarrow{a}_{l,j}$ as 
$a_{l,j, s} \leftarrow a_{l,j, s} - \eta_t z_t$.\;
}
\Return $\overrightarrow{a}$
\end{algorithm}

\begin{theorem}\label{thm:gradient}
Given a QNN with $c_{QNN}$ parameters (components) in $\overrightarrow{a}$. Randomized Quantum SGD  in Algorithm \ref{alg:QSGD} outputs a vector $\overrightarrow{Z}_t$, whose expectation conditioned on the $t^{th}$ sample satisfies: 
\begin{align*}
\EE[\overrightarrow{Z}_t| \rho_t, y_t]= \frac{1}{c_{QNN}}\nabla \Loss(\overrightarrow{a},\rho_t, y_t),
\end{align*}
where $c_{QNN}$ is the number of components in $\overrightarrow{a}$. 
\end{theorem}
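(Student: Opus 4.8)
The plan is to derive the theorem from Lemma~\ref{lem:gradient estimation} by one application of the law of total expectation over the random coordinate that Algorithm~\ref{alg:QSGD} selects. First I would reindex the $c_{QNN}$ scalar entries of $\overrightarrow{a}$ by a single index, write $K$ for the coordinate chosen in lines~3 and~5 of the algorithm, and let $e_K\in\RR^{c_{QNN}}$ be the associated standard basis vector. By construction the reported vector is $\overrightarrow{Z}_t = Z_t\,e_K$: all entries are set to zero except the selected one, whose value is the measured derivative $Z_t=-2(-1)^b\ell(y_t,\hat{y})$ produced in lines~6--7.

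Next I would check that $K$ is uniform. Line~3 chooses one of the $m$ QPs and line~5 chooses one of its $4^k$ Fourier coefficients; since every QP carries exactly $4^k$ coefficients, any fixed coordinate $\kappa$ is selected with probability $\frac{1}{m}\cdot\frac{1}{4^k}=\frac{1}{c_{QNN}}$, independently of the sample, so $\PP(K=\kappa\mid\rho_t,y_t)=1/c_{QNN}$.

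The third step conditions on $K=\kappa$ and invokes the per-coordinate unbiasedness of Lemma~\ref{lem:gradient estimation}, giving $\EE[Z_t\mid K=\kappa,\rho_t,y_t]=\partial\Loss(\overrightarrow{a},\rho_t,y_t)/\partial a_\kappa$. Assembling the pieces through the tower property then yields
\begin{align*}
\EE[\overrightarrow{Z}_t\mid\rho_t,y_t]
&=\sum_\kappa \PP(K=\kappa)\,\EE[Z_t\mid K=\kappa,\rho_t,y_t]\,e_\kappa\\
&=\frac{1}{c_{QNN}}\sum_\kappa \frac{\partial\Loss(\overrightarrow{a},\rho_t,y_t)}{\partial a_\kappa}\,e_\kappa
=\frac{1}{c_{QNN}}\nabla\Loss(\overrightarrow{a},\rho_t,y_t),
\end{align*}
which is exactly the claim.

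The main obstacle is justifying the third step when $\kappa$ lies in an inner layer, since Lemma~\ref{lem:gradient estimation} is proved only for the single-layer operator \eqref{eq:perceptron U1}. There I would argue that, once backpropagation has fixed layers $1,\dots,l-1$ so that $\rho_t^{l-1}$ becomes the effective input, and the downstream layers $l{+}1,\dots,L$ together with $\mathcal{M}$ are absorbed into an effective readout measurement, differentiating the full loss in coordinate $\kappa$ collapses exactly to the single-perceptron commutator computation underlying the lemma. Verifying that this absorption is legitimate — rather than the routine bookkeeping of the averaging argument — is where the real work lies.
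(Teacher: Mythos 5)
Your proof is correct and follows essentially the same route as the paper's: a law of total expectation over the uniformly selected coordinate combined with the per-coordinate unbiasedness of Lemma~\ref{lem:gradient estimation} (the paper writes this as $\EE[\overrightarrow{Z}_t|\rho_t,y_t]=\frac{1}{c_{QNN}}\sum_{l,j,\bfs}\EE[\overrightarrow{Z}_{t,l,j,\bfs}|\rho_t,y_t]$ and then applies the lemma term by term), with your version being slightly more explicit about why the two-stage QP-then-coefficient selection yields a uniform coordinate. The multi-layer caveat you flag is real but is equally unaddressed in the paper's own proof, which simply cites Lemma~\ref{lem:gradient estimation} at step (b); your sketch of absorbing the downstream layers into effective measurement operators is the standard way to close that gap.
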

\begin{proof}
Let us index all the Fourier coefficients of all the QPs of the network. For that denote by $a_{l,j,\bfs}$, the Fourier coefficient of the $j$th QP at layer $l$, where $l\in[1:L]$, $j\in [1:m_l]$ and $\bfs \in \set{0,1,2,3}^k$. With this notation, let $Z_{t,l,j,\bfs}$ denote the outcome of the gradient measure for measuring the derivative of the loss w.r.t $a_{l,j,\bfs},$ that is $\frac{\partial L(\overrightarrow{a},\rho_t, y_t)}{\partial a_{l,j,\bfs}}$. Furthermore, let $\overrightarrow{Z}_{t,l,j,\bfs}\in \RR^{c_{QNN}}$ denote a vector which is zero everywhere except at the coordinate corresponding to $(l,j,\bfs)$ which is equal to $Z_{t,l,j,\bfs}$. Similarly, let $\overrightarrow{\frac{\partial L(\overrightarrow{a},\rho_t, y_t)}{\partial a_{l,j,\bfs}}}$ be a vector in $\RR^{c_{QNN}}$ which is zero everywhere except at the coordinate corresponding to $(l,j,\bfs)$ which is equal to  $\frac{\partial L(\overrightarrow{a},\rho_t, y_t)}{\partial a_{l,j,\bfs}}$. 
Using this notation, we have that 
\begin{align*}
\EE[\overrightarrow{Z}_t |  \rho_t, y_t] & \stackrel{(a)}{=} \frac{1}{c_{QNN}}\sum_{l=1}^L\sum_{j=1}^{m_l}\sum_{\bfs \in \{0,1,2,3\}^k} \EE[ \overrightarrow{Z}_{t,l,j,\bfs} | \rho_t, y_t]\\
&\stackrel{(b)}{=}\frac{1}{c_{QNN}}\sum_{l=1}^L\sum_{j=1}^{m_l}\sum_{\bfs \in \{0,1,2,3\}^k} \overrightarrow{\frac{\partial L(\overrightarrow{a},\rho_t, y_t)}{\partial a_{l,j,\bfs}}}\\
&\stackrel{(c)}{=}\frac{1}{c_{QNN}} \nabla \Loss(\overrightarrow{a},\rho_t, y_t),
\end{align*}
where (a) holds because as described in Algorithm \ref{alg:QSGD} for generating $\overrightarrow{Z}_t$ each time an index $(l,j,\bfs)$ is selected at random out of the total of $c_{QNN}$ indeces and updated accordingly. Equality (b) is due to Lemma \ref{lem:gradient estimation}. Lastly, equality (c) follows from the definition of the gradient and the notation used for $\overrightarrow{\frac{\partial L(\overrightarrow{a},\rho_t, y_t)}{\partial a_{l,j,\bfs}}}$. With that the proof is complete.
\end{proof}

With this result, we obtain our SGD update rule as 
 \begin{align*}
\overrightarrow{a}^{(t+1)} = \overrightarrow{a}^{(t)} - \eta_t \overrightarrow{z_t},
\end{align*}
where $\eta_t$ is the learning rate containing the normalization effect of $c_{QNN}$.

\section{Discussion and Analysis}

\subsection{Convergence Rate}

We now present theoretical analysis of our proposed approach. Specifically, we show that randomized QSGD converges to the optimal choice of parameters when the underlying objective function is bounded and convex. Moreover, we argue that randomized QSGD is sample efficient as compared to the other QSGD methods, where the gradient is approximated using repeated measurements. Specifically, it has faster convergence rate as function of the number of samples/ copies.    We note here that, similar to the classical NNs, the convexity requirement is not generally satisfied in QNNs. However, our analysis with the convexity assumption provides insights on the convergence rate even for non-convex (DNN) objective functions solved using trust region, cubic regularization, or related techniques. Our numerical results in the next section demonstrate the convergence of randomized QSGD for training QNNs. 
 
 With Theorem \ref{thm:gradient}, and the standard techniques for analysis of the convergence rate of classical SGD  \cite{Shalev2014}, we obtain the following result:  
\begin{theorem}
Suppose the loss function $\ell(y, \hat{y})$ is bounded by $\gamma\in \RR$. Suppose also that for any choice of $(\rho, y)$, the sample's expected loss $\Loss(\overrightarrow{a},\rho, y)$ is a convex function of $\overrightarrow{a}$ with $\norm{\overrightarrow{a}}=1$. If our randomized SGD (Algorithm \ref{alg:QSGD}) is performed for $T$ iterations and $\eta = \frac{1}{2 \gamma\sqrt{T}}$, then the following inequality is satisfied
\begin{align*}
\EE[\Loss(\overrightarrow{a}_{ave},\rho, y)]- \Loss(\overrightarrow{a}^*) \leq \frac{2\gamma c_{QNN} }{\sqrt{T}},
\end{align*}
where $\overrightarrow{a}_{ave} = \frac{1}{T}\sum_t \overrightarrow{a}^{(t)}$,  $c_{QNN}$ is the number of parameters,  and $\overrightarrow{a}^* = \argmin_{\overrightarrow{a}: \|{\overrightarrow{a}}\| = 1}\Loss(\overrightarrow{a})$. 
\end{theorem} 
\begin{proof}
The theorem follows from Theorem \ref{thm:gradient} the following result.

\begin{customthm}{14.8 \cite{Shalev2014}}
Let $f$ be a convex function and $w^* = \argmin_{w:\norm{w}\leq B} f(w)$. Let $\{\overrightarrow{V}_t\}_{t=1}^T$ be a sequence of random vectors from SGD satisfying $\EE[\overrightarrow{V}_t | w^{(t)}] = \nabla f(w^{(t)})$, and $\norm{V_t}\leq \alpha$ with probability 1 for all $t$. Then, running SGD for $T$ iterations with $\eta = \frac{B}{\alpha \sqrt{T}}$ gives the following inequality 
\begin{align*}
\EE[f(w_{ave})] - f(w^*) \leq \frac{B\alpha}{\sqrt{T}},
\end{align*}
where $w_{ave} = \frac{1}{T}\sum_t w^{(t)}.$
\end{customthm}
As a result, from Theorem 1 of the main paper, setting $\overrightarrow{V}_t=c_{QNN}\overrightarrow{Z}_t$ gives a sequence of random vectors satisfying $\EE[\overrightarrow{V}_t| \rho_t, y_t]= \nabla \Loss(\overrightarrow{a},\rho_t, y_t)$. Since component of $\overrightarrow{Z}_t$ are zero except at one coordinate taking values from $-2(-1)^{b}\ell(y, \hat{y})$, then 
\begin{align*}
\norm{V_t} \leq c_{QNN} \norm{\overrightarrow{Z}_t} \leq 2 \gamma c_{QNN} ,
\end{align*}
where $\gamma = \max_{y,\hat{y}} \abs{\ell(y, \hat{y})}.$ Consequently, the proof is complete by applying the above theorem with $B=1$ and $\alpha = 2 \gamma c_{QNN}$. 
\end{proof}

 As a result, the convergence rate of randomized QSGD scales with the number of parameters $c_{QNN}$. Next, we compare this result to the previous methods in which the gradient is approximated by repeated measurement on several copies of the samples. First, we bound the number of copies needed to approximate the gradient of $\Loss(\overrightarrow{a},\rho, y)$ . 

\begin{lem}\label{lem:gradient approx}
Given $\varepsilon, \delta \in (0,1)$, with probability $(1-\delta)$, the gradient of a sample's expected loss is  approximated with error upto $\varepsilon$ and by measuring $O( \frac{1}{\varepsilon^2}c_{QNN}\log \frac{c_{QNN}}{\delta})$ copies.
\end{lem}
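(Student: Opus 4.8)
The plan is to estimate each of the $c_{QNN}$ partial derivatives separately by empirical averaging of independent gradient measurements, and then combine the per-coordinate guarantees by a union bound. First I would fix a single coordinate, say the one indexed by $(l,j,\bfs)$, and draw $N$ independent copies of the sample $(\rho,y)$. Running the procedure of Figure \ref{fig:QNN gradient} on each copy produces i.i.d. outcomes $Z^{(1)},\dots,Z^{(N)}$ which, by Lemma \ref{lem:gradient estimation}, are each unbiased for $\frac{\partial \Loss(\overrightarrow{a},\rho,y)}{\partial \qas}$. The structural fact I would exploit is that every outcome has the form $Z^{(i)}=-2(-1)^b\ell(y,\hat y)$, so with $\gamma=\max_{y,\hat y}\abs{\ell(y,\hat y)}$ we have $\abs{Z^{(i)}}\le 2\gamma$; hence the empirical mean $\bar Z_N=\frac1N\sum_i Z^{(i)}$ is an average of bounded i.i.d. random variables lying in an interval of width $4\gamma$.

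Next I would invoke a concentration bound for bounded averages. By Hoeffding's inequality, $\PP\{\abs{\bar Z_N-\EE[Z]}\ge \varepsilon\}\le 2\exp\!\big(-N\varepsilon^2/(8\gamma^2)\big)$. Choosing $N=\big\lceil \tfrac{8\gamma^2}{\varepsilon^2}\log\tfrac{2c_{QNN}}{\delta}\big\rceil$ forces the right-hand side below $\delta/c_{QNN}$, so each coordinate is estimated within additive error $\varepsilon$ with failure probability at most $\delta/c_{QNN}$. To pass from a single coordinate to the entire gradient, I would take a union bound over the $c_{QNN}$ coordinates: the probability that some coordinate deviates by more than $\varepsilon$ is at most $c_{QNN}\cdot(\delta/c_{QNN})=\delta$. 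Thus with probability $1-\delta$ every component — and therefore the gradient in the max norm — is approximated to within $\varepsilon$. Since each coordinate consumes $N$ copies and there are $c_{QNN}$ of them, the total number of measured copies is $c_{QNN}\cdot N = O\!\big(\tfrac{1}{\varepsilon^2}c_{QNN}\log\tfrac{c_{QNN}}{\delta}\big)$, which is exactly the claimed bound.

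I do not expect a genuine obstacle here; the argument is a textbook concentration-plus-union-bound, and Lemma \ref{lem:gradient estimation} already supplies the two ingredients it rests on, namely unbiasedness and the uniform bound $\abs{Z}\le 2\gamma$. The only points requiring care are conceptual rather than technical. First, one must be explicit that ``error up to $\varepsilon$'' is a per-coordinate (max-norm) guarantee: insisting instead on an $\ell_2$ error $\varepsilon$ for the full gradient would demand per-coordinate accuracy $\varepsilon/\sqrt{c_{QNN}}$ and inflate the copy count by a factor $c_{QNN}$. Second, it is worth emphasizing that this estimator consumes $N$ genuinely independent copies for each of the $c_{QNN}$ coordinates — precisely the data-replication cost forbidden by the no-cloning principle and avoided by the randomized scheme of Algorithm \ref{alg:QSGD} — so this lemma should be read as quantifying the baseline that our method improves upon rather than as a procedure we endorse.
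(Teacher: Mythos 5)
Your proposal is correct and follows essentially the same route as the paper's own proof sketch: a per-coordinate Hoeffding bound using the uniform bound $\abs{Z}\le 2\gamma$ from the measurement outcomes, copies split evenly across the $c_{QNN}$ coordinates, and a union bound set equal to $\delta$ to solve for the copy count. Your added remarks (that the guarantee is per-coordinate/max-norm, and that the lemma quantifies the replication-based baseline rather than the endorsed procedure) are accurate and consistent with how the paper uses the result.
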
 
 \begin{proof}[Proof Sketch]
 The proof follows from standard argument using Hoeffding's inequality with the difference that each copy can be used only once. Hence, each component of the gradient is estimated with a fraction $\frac{1}{c_{QNN}}$ of the total copies. As a result, the probability that  at least one component is estimated with error greater than $\varepsilon$ is bounded by $c_{QNN}\exp\{-\varepsilon^2 \frac{n}{c_{QNN}}\alpha\}$, where $\alpha$ is a constant depending on the loss function. We want this probability to be less than $\delta$. Thus, equating the bound with $\delta$ gives the required number of copies. 
 \end{proof}
Consequently, for a fair comparison with our approach, we fix the total number of samples/ copies. With $T$ samples randomized QSGD runs with $T$ iterations; whereas the prior repeated sample method with gradient approximation runs with $O(\frac{T\varepsilon^2}{c_{QNN}\log c_{QNN}})$ iterations. Therefore, given $T$ samples/ copies and under the convexity assumption, QSGD with gradient approximation has the following convergence rate:
 \begin{align*}
\EE[\Loss(\overrightarrow{a}_{ave},\rho, y)] - \Loss(\overrightarrow{a}^*) \leq \order{\frac{\sqrt{c_{QNN}\log c_{QNN}}}{\epsilon \sqrt{T}}}.
\end{align*}
Hence, for QNNs with moderate number of parameters (e.g., near term QNNs), randomized SGD converges faster even when sample duplication is not an issue. Faster convergence holds as long as $\frac{\log c_{QNN}}{c_{QNN}} = O(\epsilon^2)$. For instance, with $\varepsilon \approx 0.05$, the number of parameters can be $c_{QNN}\approx 3000$.
For larger networks, one might consider a hybrid approach.

\subsection{The Expressive Power of QNNs}
It is well known that classical neural networks are universal function approximators. For quantum analogs of this result, we can show that QNNs implement every quantum measurement on qubits. 
\begin{theorem}
For any quantum measurement $\mathcal{M}$ on the space of $d$ qubits, there is a QNN on the space of $d'>d$ that implements $\mathcal{M}$. Moreover, it is sufficient to use QPs with narrowness of $k=2$.      
\end{theorem}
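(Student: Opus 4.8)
The plan is to assemble three standard facts: Naimark's dilation theorem (to reduce an arbitrary POVM to a unitary followed by a fixed projective measurement), the exact universality of two-qubit gates, and the surjectivity of the Hermitian exponential map (so that each two-qubit gate is itself an admissible $k=2$ QP). I would first invoke Naimark dilation: given the POVM $\mathcal{M}=\set{M_y:y\in\mathcal{Y}}$ on the $d$-qubit space $H$, there is a choice of ancilla qubits, a unitary $W$ on the enlarged space, and a fixed projective measurement $\set{P_y}$ on the readout register such that, with the ancillas initialized to $\ket{0}$,
$$\Tr{M_y \rho} = \Tr{P_y\, W (\rho \otimes \ketbra{0})\, W^\dagger} \qquad \text{for all } \rho.$$
This is precisely the measurement semantics of the QNN established in the Model section: the input $\rho$ is padded by $\ket{0}$ ancillas (giving $d'>d$), the network applies $U_{QNN}(\overrightarrow{a})$, and the terminal measurement acts on the readout qubits. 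Thus it suffices to realize $W$ as $U_{QNN}(\overrightarrow{a})$ using only QPs with $k=2$, and to identify the QNN's terminal measurement with $\set{P_y}$.

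Next I would reduce $W$ to QPs. By the exact two-level (Gray-code) decomposition, any unitary on $d'$ qubits factors into a finite product of unitaries each acting nontrivially on at most two qubits; stacking these across the layered feed-forward architecture yields exactly the form $U_L\cdots U_1 = U_{QNN}$. It then remains to check that each two-qubit factor $V$ is a legal QP. Since $V$ is unitary, spectral decomposition (equivalently, surjectivity of $\exp:\mathfrak{u}(4)\to U(4)$) gives a Hermitian $A$ with $V=e^{iA}$, and $A$ is supported on a two-coordinate subsystem $\mathcal{J}$ with $|\mathcal{J}|=2$. By Remark \ref{rem:quantum Fourier}, $A=\sum_{\bfs} a_{\bfs}\,\qps$ with $a_{\bfs}=0$ whenever $\bfs$ has a nonzero component outside $\mathcal{J}$; that is, $A$ is band-limited with narrowness $k=2$. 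Hence $V=e^{iA}$ is an admissible $k=2$ QP, and the whole decomposition of $W$ lives inside the QNN model.

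The main obstacle I anticipate is architectural rather than algebraic: the feed-forward definition constrains each layer to QPs on essentially disjoint subsystems, whereas the universal decomposition of $W$ may reuse and overlap the same pair of qubits in an arbitrary order. One must therefore argue that such a circuit can be serialized into sufficiently many feed-forward layers — at most one QP per disjoint subsystem per layer — so that the ordered product of layer unitaries reproduces $W$ exactly, without the per-layer disjointness forcing any unwanted reordering of non-commuting factors. A secondary point to handle carefully is bookkeeping the number of ancillas required by the dilation, which fixes how much $d'$ must exceed $d$, and confirming that the QNN's terminal POVM can be taken to coincide with the Naimark projective measurement $\set{P_y}$ on the readout register.
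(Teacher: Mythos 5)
Your proposal is correct and follows essentially the same route as the paper: dilate the measurement to a unitary on a padded $d'$-qubit space (you via Naimark, the paper via ``measurement $=$ channel $+$ standard-basis readout'' plus Stinespring, which is the same dilation), then invoke exact universality of two-qubit gates and observe that each two-qubit unitary is $e^{iA}$ for a Hermitian $A$ supported on two coordinates, hence a legal $k=2$ QP. Your added care about serializing overlapping two-qubit factors into feed-forward layers (one QP per layer if necessary) and about the readout POVM matching the dilation's projective measurement fills in details the paper's sketch leaves implicit, but does not change the argument.
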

\begin{proof}[Proof Sketch]
It is known that any quantum measurement with finite number of outcomes can be written as a quantum channel followed by a measurement in the standard basis on the readout qubits \cite{Wilde2013}. Hence, from Stinespring's dilation theorem \cite{Holevo2012}, such quantum channel can be written as $\tr_E\{V (\rho\tensor \ketbra{e}_E) V^\dagger\}$, where $V$ is a unitary operator on the padded space of $d'$ qubits. In addition, such unitary operator can be implemented using one perceptron with $k=d'$. The second statement follows from the fact that quantum 2-gates are universal. 
\end{proof}

\section{Numerical Results}\label{sec:experiments}

We experimentally assess the performance of our QNNs trained using randomized QSGD in Algorithm \ref{alg:QSGD}. In our experiments, we focus on binary classification of quantum states with $\set{-1,1}$ as the label set and with conventional 0-1 loss to measure predictor's accuracy. 

\subsection{Dataset}

We use a synthetic dataset from recent efforts \cite{Mohseni2004,Chen2018,Patterson2021,Li_Song_Wang_2021} focused on quantum state discrimination. In this dataset, for each pair of parameters $u,v\in [0,1]$, three input states are defined as follows:
\begin{align*}
 \ket{\phi_u} &= \sqrt{1-u^2}\ket{00}+u\ket{10},\\
\ket{\phi_{\pm v}} &= \pm \sqrt{1-v^2}\ket{01}+v\ket{10}.
 \end{align*} 
Using these input states, the two quantum states to be classified are:  
\begin{align*}
\rho_1(u) &=  \ketbra{\phi_u},\\
\rho_2(v) &= \frac{1}{2} \Big( \ketbra{\phi_{+v}} + \ketbra{\phi_{- v}}\Big).
\end{align*}
 We assign label $y=\-1$ for the first state and $y=1$ for the second state. For each sample, we generate a state $\rho_1$  with probability $p = 1/3$ and state $\rho_2$ with probability $(1-p)=2/3$. Furthermore, for each sample, parameters $u$ and $v$ are selected randomly, independently, and with the uniform distribution on $[0,1]^2$. Hence, there are infinitely many realizations of samples, and no sample replication is allowed.
 
 \subsection{QNN Setup}
 
 The QNN we use is shown in Figure \ref{fig:QNN exp}. It has two layers with two quantum QPs in the first layer and one quantum QP in the second layer. The input states are padded with two auxiliary qubits $\ket{0}\tensor\ket{0}$; hence, $d=2$ and $d'=4$. The bandwidth parameter is set to $k=2$ for each QP. The measurement used in the readout qubits has two outcomes in $\pmm$, each measured along the computational basis as  $$\mathcal{M}=\Big\{\ketbra{00}+\ketbra{11},~~ \ketbra{01}+\ketbra{10}\Big\}.$$

\begin{figure}[hbtp]
\centering
\includegraphics[scale=0.9]{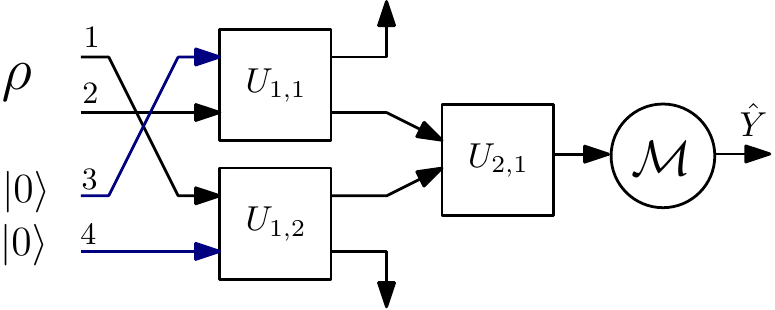}
\caption{The QNN used in the experimental study. }
\label{fig:QNN exp}
\end{figure}
\subsection{Experimental Results}

We train the QNN in Figure \ref{fig:QNN exp} using Algorithm \ref{alg:QSGD} and with $\eta_t=\frac{\alpha}{\sqrt{t}} $ and $\alpha \approx 0.77$.
Our training process has no epochs in contrast to conventional SGD, since sample replication is prohibited. Therefore, to show progress during the training phase, we group samples into multiple batches, each of $100$ samples. For each batch, we compute the average empirical loss of the QNN with updated parameters. Figure \ref{fig:Numerical Results} shows the loss during the training phase averaged over each batch. In addition to the empirical loss, we compute the average of the sample's expected loss $L( \overrightarrow{a}^{(t)}, \rho_t, y_t)$ for all samples in the batch (as in \eqref{eq: average sample exp loss }).  Furthermore, we compare the performance of the network with the optimal expected loss within each batch. The following result provides a closed-form expression for the optimal loss:

\begin{lem}\label{lem:optimal loss}
Given a set of $m$ labeled density operators $\{(\rho_j, y_j)\}_{j=1}^m$ with $y_j \in \pmm$, the minimum of the expected sample loss averaged over the $m$ samples as in \eqref{eq: average sample exp loss } equals:
$\frac{1}{2}\Big(1-\Big\| \frac{1}{m} \sum_j y_j \rho_j\Big\|_1\Big),$ 
where $\norm{\cdot}_1$ is the trace norm and the minimum is taken over all measurements applied on each $\rho_j$ for predicting $y_j$. 
\end{lem}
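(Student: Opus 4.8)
The plan is to recognize this as the quantum binary hypothesis testing (Helstrom) problem in disguise. Since the labels lie in $\pmm$, a predictor is a two-outcome POVM $\mathcal{M}=\{M_{-1},M_{+1}\}$ with $M_{-1},M_{+1}\succeq 0$ and $M_{-1}+M_{+1}=I$, and the minimum in the statement is over all such single POVMs applied to each $\rho_j$. Under the $0$-$1$ loss the expected loss on sample $(\rho_j,y_j)$ is exactly the misclassification probability $\tr\{M_{-y_j}\rho_j\}$, since the only contributing term is $\hat{y}=-y_j$. Hence the objective in \eqref{eq: average sample exp loss } becomes $\frac{1}{m}\sum_j \tr\{M_{-y_j}\rho_j\}$, and the whole task is to minimize this over valid POVMs.

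First I would split the sum by label value and set $M := M_{+1}$, $M_{-1}=I-M$ with $0\preceq M\preceq I$. Writing the class-averaged (subnormalized) states $P_{+}=\frac{1}{m}\sum_{j:\,y_j=+1}\rho_j$ and $P_{-}=\frac{1}{m}\sum_{j:\,y_j=-1}\rho_j$, the objective collapses to $\tr\{(I-M)P_{+}\}+\tr\{M P_{-}\}=\tr\{P_{+}\}-\tr\{M\Gamma\}$, where $\Gamma := P_{+}-P_{-}=\frac{1}{m}\sum_j y_j\rho_j$ is exactly the Hermitian operator inside the trace norm. Thus minimizing the average loss is equivalent to \emph{maximizing} $\tr\{M\Gamma\}$ over $0\preceq M\preceq I$.

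The key step, and the main obstacle, is the variational identity $\max_{0\preceq M\preceq I}\tr\{M\Gamma\}=\tr\{\Gamma_{+}\}$, where $\Gamma=\Gamma_{+}-\Gamma_{-}$ is the decomposition of $\Gamma$ into its positive and negative parts. The upper bound follows by writing $\tr\{M\Gamma\}=\tr\{M\Gamma_{+}\}-\tr\{M\Gamma_{-}\}\leq \tr\{M\Gamma_{+}\}\leq \tr\{\Gamma_{+}\}$, using $M\succeq 0$, $\Gamma_{-}\succeq 0$ for the first inequality and $M\preceq I$, $\Gamma_{+}\succeq 0$ for the second; attainment is witnessed by taking $M$ to be the projector onto the positive eigenspace of $\Gamma$ (the Helstrom measurement), which makes $\tr\{M\Gamma_{-}\}=0$ and $\tr\{M\Gamma_{+}\}=\tr\{\Gamma_{+}\}$. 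Consequently the minimal loss equals $\tr\{P_{+}\}-\tr\{\Gamma_{+}\}$.

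Finally I would convert this into the trace-norm form by elementary bookkeeping. The identities $\tr\{\Gamma_{+}\}-\tr\{\Gamma_{-}\}=\tr\{\Gamma\}=\tr\{P_{+}\}-\tr\{P_{-}\}$ and $\tr\{\Gamma_{+}\}+\tr\{\Gamma_{-}\}=\norm{\Gamma}_1$ give $\tr\{\Gamma_{+}\}=\tfrac{1}{2}\big(\tr\{P_{+}\}-\tr\{P_{-}\}+\norm{\Gamma}_1\big)$. Substituting into $\tr\{P_{+}\}-\tr\{\Gamma_{+}\}$ and using the normalization $\tr\{P_{+}\}+\tr\{P_{-}\}=1$ yields $\tfrac{1}{2}\big(1-\norm{\Gamma}_1\big)$, which is exactly the claimed value $\tfrac{1}{2}\big(1-\norm{\frac{1}{m}\sum_j y_j\rho_j}_1\big)$. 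Beyond the Helstrom variational step, everything reduces to this algebra, so I expect no further difficulty.
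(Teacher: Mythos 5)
Your proof is correct. The paper takes a shorter route: it forms the normalized class-average states $\rho_{-}=\frac{1}{m_{-}}\sum_{j}\rho_j\11\{y_j=-1\}$ and $\rho_{+}=\frac{1}{m_{+}}\sum_{j}\rho_j\11\{y_j=+1\}$ with priors $q=m_{-}/m$ and $1-q=m_{+}/m$, observes that the averaged expected loss is exactly the error probability for discriminating these two states, and then invokes the Holevo--Helstrom theorem as a black box to read off $P^{max}_{success}=\frac{1}{2}\bigl(1+\norm{q\rho_{-}-(1-q)\rho_{+}}_1\bigr)$, noting that the operator inside the norm coincides (up to an irrelevant sign) with $\frac{1}{m}\sum_j y_j\rho_j$. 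You perform the same reduction to binary state discrimination between (subnormalized) class averages, but instead of citing Helstrom you prove the needed variational identity $\max_{0\preceq M\preceq I}\tr\{M\Gamma\}=\tr\{\Gamma_{+}\}=\frac{1}{2}\bigl(\tr\{\Gamma\}+\norm{\Gamma}_1\bigr)$ from scratch via the positive/negative decomposition of $\Gamma$. The trade-off is clear: the paper's version is two lines given the standard reference, while yours is self-contained, makes the optimal measurement explicit (the projector onto the positive eigenspace of $\frac{1}{m}\sum_j y_j\rho_j$), and makes transparent where the implicit assumption of $0$-$1$ loss enters. Both are valid; your algebraic bookkeeping at the end (using $\tr\{P_{+}\}+\tr\{P_{-}\}=1$) is exactly the normalization the paper handles through the priors $q$ and $1-q$.
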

\begin{proof}
The argument follows from the Holevo–Helstrom theorem \cite{Holevo2012}. Note that we can assign two type of density operators for each batch: one with label $y=1$ and the other with $y=-1$. These density operators are 
\begin{align*}
\rho_- = \frac{1}{m_{-}} \sum_{j} \rho_j \11\{y_j = -1\}\\
\rho_+ = \frac{1}{m_{+}} \sum_{j} \rho_j \11\{y_j = +1\},
\end{align*}
where $m_{-}$ and $m_{+}$ are the number of $\rho_j$'s with label $y_j=-1$ and $y=+1$, respectively. Therefore, one can assume that $\rho_-$ occurs with probability $q = \frac{m_-}{m}$ and $\rho_+$ occurs with probability $(1-q) = \frac{m_+}{m}$. With this setup, the expected loss averaged over the $m$ samples equals to  probability of error for distinguishing between $\rho_-$ and $\rho_+$. 
From the Holevo–Helstrom theorem \cite{Holevo2012}, the maximum probability of success for distinguishing between the two types of density operators is 
\begin{align*}
P^{max}_{success} = \frac{1}{2}\big(1+\norm{q \rho_- - (1-q)\rho_+}_1\big).
\end{align*}
It is not difficult to check that the terms inside the trace norm equal to $\frac{1}{m} \sum_j y_j \rho_j$ that is the desired expression as in the statement of the lemma. Thus, the minimum probability of error (expected loss) is $(1-P^{max}_{success})$ as in the statement of the lemma.
\end{proof}

\begin{figure}[hbtp]
\centering
\includegraphics[scale=0.3]{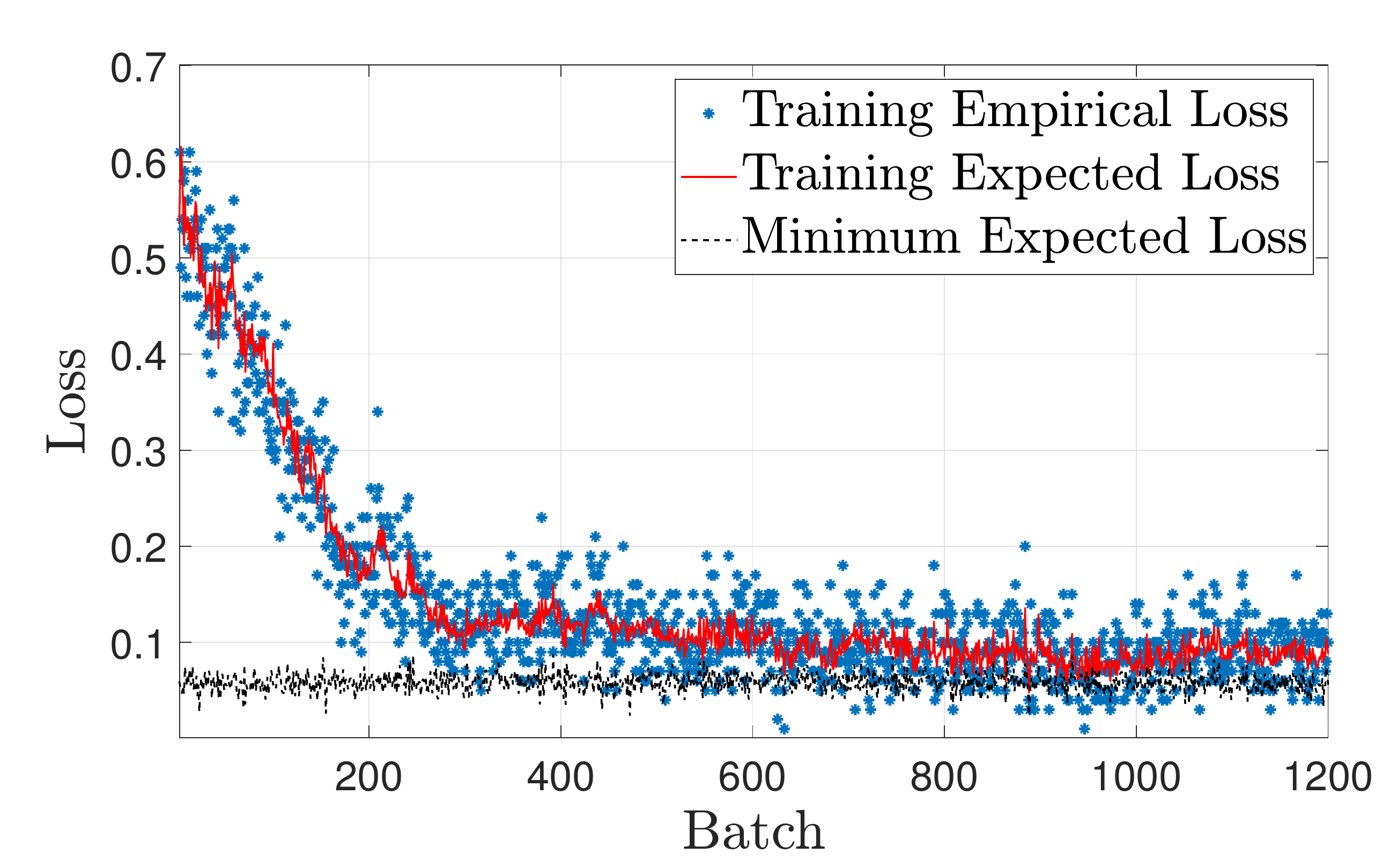}
\caption{The training loss of the QNN in Figure \ref{fig:QNN exp} using randomized QSGD. The loss is averaged over each batch of size $100$ samples. The figure shows the loss vs. batch number as the QNN is updated using Algorithm \ref{alg:QSGD}.  The minimum expected loss is calculated using Lemma \ref{lem:optimal loss}.}
\label{fig:Numerical Results}
\end{figure}


As seen in Figure \ref{fig:Numerical Results}, training loss converges to its minimum as more batches of samples are used, showing that randomized QSGD converges without the need for exact copies. 

To assess the impact of replicated data, we repeat the experiment, this time by allowing unlimited exact copies of samples. Therefore, instead of randomized QSGD, we train the QNN by exactly computing the gradient as in \eqref{eq:Delta Loss}. Figure \ref{fig:QNN Trace} shows the training loss for each batch of samples.
Comparing results in Figure \ref{fig:QNN Trace} (unlimited replication) with Figure \ref{fig:Numerical Results} (no replication), we observe a near-optimal training of the QNN using the randomized QSGD.   
\begin{figure}
    \centering
    \includegraphics[scale =0.3]{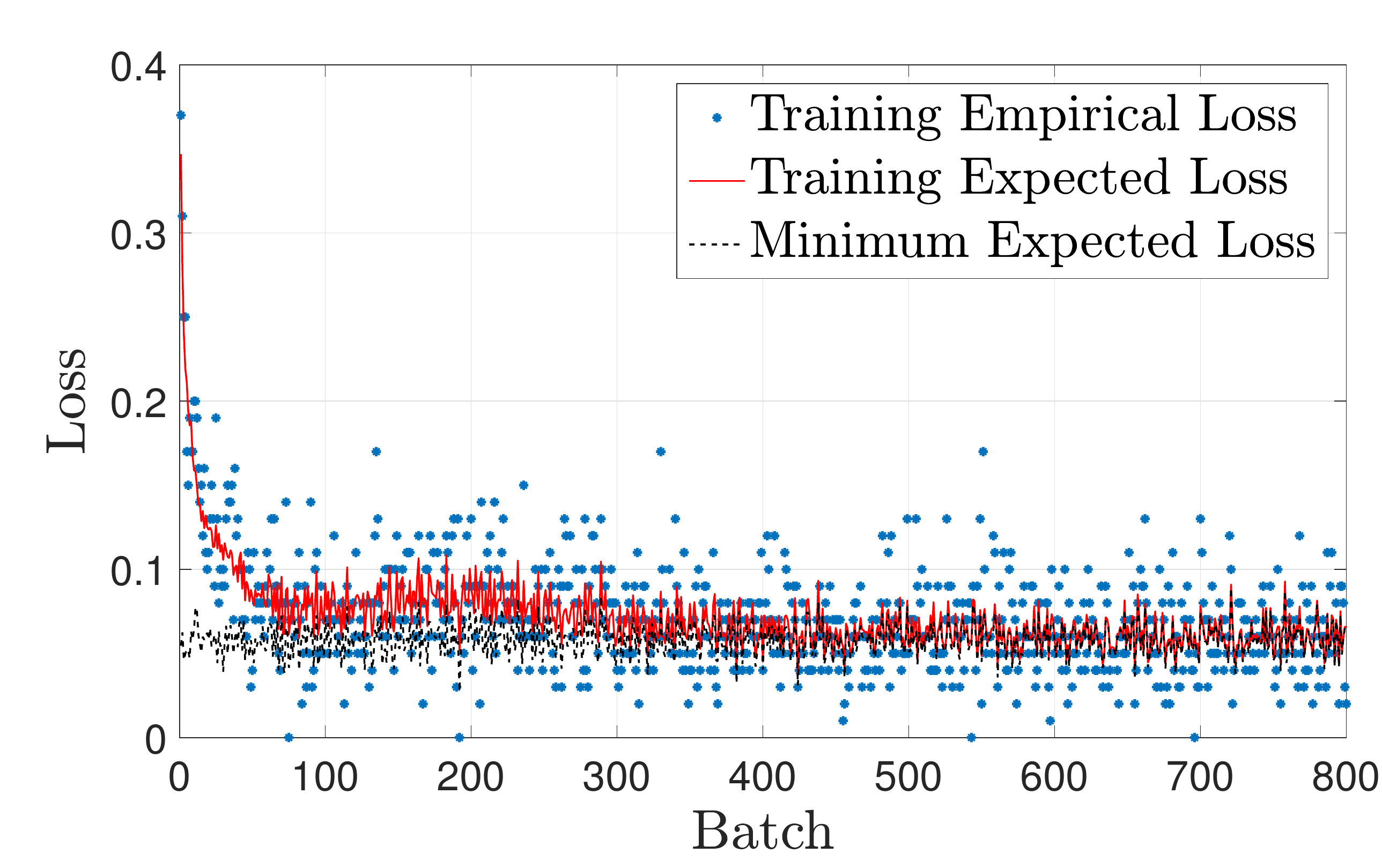}
    \caption{The training loss of the QNN in Figure \ref{fig:QNN exp} using exact computation of the gradient.}
    \label{fig:QNN Trace}
\end{figure}

\subsection{Comparison of Loss Function Values}

Once the QNN is trained, we generate a new set of $100$ samples to test the performance of the QNN. Table \ref{tab:Test Resuls} shows the resulting accuracy of the network, as compared to the optimal accuracy from Lemma \ref{lem:optimal loss} and the variant of the experiment with replicated samples. The resulting accuracy for our randomized QSGD is observed to be close to optimum. 
\begin{table}[ht]
    \centering
    \begin{tabular}{c|c c c}
    \toprule
                       &    QNN & QNN (gradient comp.)  & Optimal  \\
                       \hline 
     Acc.     &  $91\% \pm 1 $ &   $93.48\% \pm 1$  & $93.51 \%$\\
     \bottomrule
    \end{tabular}
    \caption{Validation of accuracy of the trained QNN in Figure \ref{fig:QNN exp} for classifying test samples. The first column is for the QNN trained with randomized QSGD. The second column is for the QNN trained using direct computation of the gradient. The third column is the optimal accuracy derived using Lemma \ref{lem:optimal loss}. }
    \label{tab:Test Resuls}
\end{table}


\section*{Concluding Remarks}\label{sec:conclusions}

In this work, we addressed two key shortcomings of conventional QNNs -- the need for replicated data samples for gradient computation and the exponential state space of QNNs. To address the first problem, we present a novel randomized SQGD algorithm, along with detailed theoretical proofs of correctness and performance. To address the exponential state space, we propose the use of band-limited QPs, showing that the resulting QNN can be trained in time linear in the number of QPs. We present detailed theoretical analyses, as well as experimental verification of our results. To this end, our work represents a major step towards realizable, scalable, and data-efficient QNNs.

\section*{Acknowledgements}
This work was supported in part by NSF Center on Science of Information Grants CCF-0939370 and NSF Grants CCF-2006440, CCF-2007238, OAC-1908691, and Google Research Award.

\bibliography{AAAI_main}
\end{document}


\maketitle

\section*{Proof of the Technical Results}

\begin{theorem}
Given a QNN with $c_{QNN}$ parameters (components) in $\overrightarrow{a}$. Randomized Quantum SGD  in Algorithm \ref{alg:QSGD} outputs a vector $\overrightarrow{Z}_t$, whose expectation conditioned on the $t^{th}$ sample satisfies: 
\begin{align*}
\EE[\overrightarrow{Z}_t| \rho_t, y_t]= \frac{1}{c_{QNN}}\nabla \Loss(\overrightarrow{a},\rho_t, y_t),
\end{align*}
where $c_{QNN}$ is the number of components in $\overrightarrow{a}$. 
\end{theorem}
\begin{proof}
Let us index all the Fourier coefficients of all the QPs of the network. For that denote by $a_{l,j,\bfs}$, the Fourier coefficient of the $j$th QP at layer $l$, where $l\in[1:L]$, $j\in [1:m_l]$ and $\bfs \in \set{0,1,2,3}^k$. With this notation, let $Z_{t,l,j,\bfs}$ denote the outcome of the gradient measure for measuring the derivative of the loss w.r.t $a_{l,j,\bfs},$ that is $\frac{\partial L(\overrightarrow{a},\rho_t, y_t)}{\partial a_{l,j,\bfs}}$. Furthermore, let $\overrightarrow{Z}_{t,l,j,\bfs}\in \RR^{c_{QNN}}$ denote a vector which is zero everywhere except at the coordinate corresponding to $(l,j,\bfs)$ which is equal to $Z_{t,l,j,\bfs}$. Similarly, let $\overrightarrow{\frac{\partial L(\overrightarrow{a},\rho_t, y_t)}{\partial a_{l,j,\bfs}}}$ be a vector in $\RR^{c_{QNN}}$ which is zero everywhere except at the coordinate corresponding to $(l,j,\bfs)$ which is equal to  $\frac{\partial L(\overrightarrow{a},\rho_t, y_t)}{\partial a_{l,j,\bfs}}$. 
Using this notation, we have that 
\begin{align*}
\EE[\overrightarrow{Z}_t |  \rho_t, y_t] & \stackrel{(a)}{=} \frac{1}{c_{QNN}}\sum_{l=1}^L\sum_{j=1}^{m_l}\sum_{\bfs \in \{0,1,2,3\}^k} \EE[ \overrightarrow{Z}_{t,l,j,\bfs} | \rho_t, y_t]\\
&\stackrel{(b)}{=}\frac{1}{c_{QNN}}\sum_{l=1}^L\sum_{j=1}^{m_l}\sum_{\bfs \in \{0,1,2,3\}^k} \overrightarrow{\frac{\partial L(\overrightarrow{a},\rho_t, y_t)}{\partial a_{l,j,\bfs}}}\\
&\stackrel{(c)}{=}\frac{1}{c_{QNN}} \nabla \Loss(\overrightarrow{a},\rho_t, y_t),
\end{align*}
where (a) holds because as described in Algorithm \ref{alg:QSGD} for generating $\overrightarrow{Z}_t$ each time an index $(l,j,\bfs)$ is selected at random out of the total of $c_{QNN}$ indeces and updated accordingly. Equality (b) is due to Lemma \ref{lem:gradient estimation}. Lastly, equality (c) follows from the definition of the gradient and the notation used for $\overrightarrow{\frac{\partial L(\overrightarrow{a},\rho_t, y_t)}{\partial a_{l,j,\bfs}}}$. With that the proof is complete.
\end{proof}

\begin{theorem}
Suppose the loss function $\ell(y, \hat{y})$ is bounded by $\gamma\in \RR$. Suppose also that for any choice of $(\rho, y)$, the sample's expected loss $\Loss(\overrightarrow{a},\rho, y)$ is a convex function of $\overrightarrow{a}$ with $\norm{\overrightarrow{a}}=1$. If our randomized SGD (Algorithm \ref{alg:QSGD}) is performed for $T$ iterations and $\eta = \frac{1}{2 \gamma\sqrt{T}}$, then the following inequality is satisfied
\begin{align*}
\EE[\Loss(\overrightarrow{a}_{ave},\rho, y)]- \Loss(\overrightarrow{a}^*) \leq \frac{2\gamma c_{QNN} }{\sqrt{T}},
\end{align*}
where $\overrightarrow{a}_{ave} = \frac{1}{T}\sum_t \overrightarrow{a}^{(t)}$,  $c_{QNN}$ is the number of parameters,  and $\overrightarrow{a}^* = \argmin_{\overrightarrow{a}: \|{\overrightarrow{a}}\| = 1}\Loss(\overrightarrow{a})$. 
\end{theorem} 
\begin{proof}
The theorem follows from Theorem 1 of the main paper and the following result.

\begin{customthm}{14.8 \cite{Shalev2014}}
Let $f$ be a convex function and $w^* = \argmin_{w:\norm{w}\leq B} f(w)$. Let $\{\overrightarrow{V}_t\}_{t=1}^T$ be a sequence of random vectors from SGD satisfying $\EE[\overrightarrow{V}_t | w^{(t)}] = \nabla f(w^{(t)})$, and $\norm{V_t}\leq \alpha$ with probability 1 for all $t$. Then, running SGD for $T$ iterations with $\eta = \frac{B}{\alpha \sqrt{T}}$ gives the following inequality 
\begin{align*}
\EE[f(w_{ave})] - f(w^*) \leq \frac{B\alpha}{\sqrt{T}},
\end{align*}
where $w_{ave} = \frac{1}{T}\sum_t w^{(t)}.$
\end{customthm}
As a result, from Theorem 1 of the main paper, setting $\overrightarrow{V}_t=c_{QNN}\overrightarrow{Z}_t$ gives a sequence of random vectors satisfying $\EE[\overrightarrow{V}_t| \rho_t, y_t]= \nabla \Loss(\overrightarrow{a},\rho_t, y_t)$. Since component of $\overrightarrow{Z}_t$ are zero except at one coordinate taking values from $-2(-1)^{b}\ell(y, \hat{y})$, then 
\begin{align*}
\norm{V_t} \leq c_{QNN} \norm{\overrightarrow{Z}_t} \leq 2 \gamma c_{QNN} ,
\end{align*}
where $\gamma = \max_{y,\hat{y}} \abs{\ell(y, \hat{y})}.$ Consequently, the proof is complete by applying the above theorem with $B=1$ and $\alpha = 2 \gamma c_{QNN}$. 
\end{proof}

\begin{customlemma}{3}
Given a set of $m$ labeled density operators $\{(\rho_j, y_j)\}_{j=1}^m$ with $y_j \in \pmm$, the minimum of the expected sample loss averaged over the $m$ samples as in \eqref{eq: average sample exp loss } equals:
$\frac{1}{2}\Big(1-\Big\| \frac{1}{m} \sum_j y_j \rho_j\Big\|_1\Big),$ 
where $\norm{\cdot}_1$ is the trace norm and the minimum is taken over all measurements applied on each $\rho_j$ for predicting $y_j$. 
\end{customlemma}
\begin{proof}
Note that we can assign two type of density operators for each batch: one with label $y=1$ and the other with $y=-1$. These density operators are 
\begin{align*}
\rho_- = \frac{1}{m_{-}} \sum_{j} \rho_j \11\{y_j = -1\}\\
\rho_+ = \frac{1}{m_{+}} \sum_{j} \rho_j \11\{y_j = +1\},
\end{align*}
where $m_{-}$ and $m_{+}$ are the number of $\rho_j$'s with label $y_j=-1$ and $y=+1$, respectively. Therefore, one can assume that $\rho_-$ occurs with probability $q = \frac{m_-}{m}$ and $\rho_+$ occurs with probability $(1-q) = \frac{m_+}{m}$. With this setup, the expected loss averaged over the $m$ samples equals to  probability of error for distinguishing between $\rho_-$ and $\rho_+$. 
From the Holevo–Helstrom theorem \cite{Holevo2012}, the maximum probability of success for distinguishing between the two types of density operators is 
\begin{align*}
P^{max}_{success} = \frac{1}{2}\big(1+\norm{q \rho_- - (1-q)\rho_+}_1\big).
\end{align*}
It is not difficult to check that the terms inside the trace norm equal to $\frac{1}{m} \sum_j y_j \rho_j$ that is the desired expression as in the statement of the lemma. Thus, the minimum probability of error (expected loss) is $(1-P^{max}_{success})$ as in the statement of the lemma.
\end{proof}

    \bibliographystyle{aaai21}
\bibliography{References}